\newtheorem{theorem}{Theorem}
\newtheorem{lemma}{Lemma}
\newtheorem{assumption}{Assumption}[section]
\renewcommand\nomgroup[1]{%
  \item[\bfseries
  \ifstrequal{#1}{F}{Functions}{%
  \ifstrequal{#1}{V}{Variables}{%
  \ifstrequal{#1}{S}{Sets}{%
  \ifstrequal{#1}{E}{Empirical Study}}}}%
]}
\begin{document}

\title{A Pooled Quantile Estimator for Parallel Simulations}

\author[1]{Qiong Zhang}
\author[2]{Bo Wang}
\author[ 2]{Wei Xie\thanks{Corresponding author: w.xie@northeastern.edu}
}

\affil[1]{Clemson University, Clemson, SC 29634}
\affil[2]{Northeastern University, Boston, MA 02115}







\maketitle
\begin{abstract}

	Quantile is an important risk measure quantifying the stochastic system random behaviors. This paper studies a pooled quantile estimator, which is the sample quantile of detailed simulation outputs after directly pooling independent sample paths together.
	We derive the
	asymptotic representation of the pooled quantile estimator and further prove its normality. By comparing with the classical quantile estimator used in stochastic simulation, both theoretical and empirical studies demonstrate the advantages of the proposal under the context of parallel simulation. 

\end{abstract}

\keywords{Quantile estimation, parallel computing, stochastic simulation, system risk measure}





\section{Introduction}
\label{sec:introduction}

Discrete-event simulation is often used to assess the performance of complex stochastic systems, especially in the situations where the direct analytical solution and physical experiments are infeasible or prohibitive \citep{banks2010discrete}. Advanced computer architectures have made parallel computing available and popular in many engineering and scientific areas. Nelson in \cite{Nelson_2016} 
raises new research questions about how to  {exploit} this computing advantage in estimating simulation system performance, especially risk measures, such as quantiles.
As mentioned in \cite{Nelson_2016}, a fundamental challenge is how to efficiently utilize all available parallel computing processors to improve the estimation accuracy.

In this paper, we consider the steady-state system performance. A single run of simulation generates a sample path of detailed outputs with a given run-length.
For various existing system performance (e.g., mean and risk) estimation approaches, both variance and bias of their estimators can be reduced 
by increasing the run-length \citep{Nelson_2016}. Since the detailed outputs in a simulation sample path {are} generated sequentially, it could be challenging to chop a dependent sample path into chunks and {run} separately in parallel from multiple processors. As a result, the run-length becomes the key bottleneck in improving the computational efficiency with parallel simulation.

For the classical quantile estimation approach, we typically calculate 
the sample quantile of outputs from each simulation run and then take the average of
the quantile estimators from multiple replications.
The asymptotic properties of this estimator have been well-studied in the literature (e.g., \cite{sun2010asymptotic}).
As noted in \cite{heidelberger1984quantile},
accurate estimators of tail quantile measures greatly rely on 
a sufficiently large run-length, which could be a luxury for complex and fast-evolving stochastic systems.
For example, we are interested in the 95\% quantile of waiting time in a queueing service system. 
Since the system is required to adapt fast to the evolving demand,
we need to assess the system performance and make decisions under a certain tight time deadline. 
Notice that the run-length greatly relates to the simulation running time before the deadline. 
Therefore, the parallel processors can increase the number of replication and utilize the available parallel processor
under an urgent deadline. However, the classical quantile estimators may not make efficient use of the detailed output sample paths. This could impact the quantile estimation accuracy, especially when the time budget is tight. 


	In this paper, we introduce a quantile estimator which is computed by directly pooling the detailed simulation outputs from various replications to obtain a sample quantile estimator,
	called \emph{pooled sample quantile estimator}. By pooling the dependent (within each replication) and independent (cross different replications) simulation outputs, the resulted sample quantile is introduced to estimate the quantile.
	The pooled quantile estimator has been investigated under independent and identically distributed observations, such as \cite{asmussen2007stochastic, nakayama2014confidence}, and recently been used to construct the confidence interval of quantile estimators in \cite{alexopoulos2019sequest}.
	In this paper, we develop the asymptotic results of the pooled quantile estimator based on the framework in \cite{sen1972bahadur}. Our asymptotic results show that the proposed estimator has better performance especially under the situations with multiple processors and urgent time deadlines, i.e., the job request is too urgent to produce sufficiently longer sample paths. We highlight our contribution
	as follows.
	\begin{itemize}
		\item We provide the asymptotic results of the pooled quantile estimator that are generated from multiple replications of dependent sequences. 
		\item We illustrate how the pooled quantile estimator can be used to improve the accuracy of the classical average quantile estimator under the context of parallel simulation.
	\end{itemize}


\section{A pooled quantile estimator for parallel simulation}
\label{sec:problem_statement}

Our goal is to estimate the quantile of the marginal distribution for steady-state simulation. 
Let $X$ be a random variable representing a single entry in a
detailed simulation output sample path.
We denote the marginal cumulative distribution function (CDF) by $F(x) = \mbox{P}(X\le x)$, and denote the $\alpha$-level quantile by $\xi_\alpha \equiv F^{-1}(\alpha)$, where $0<\alpha<1$. 

We consider the detailed outputs generated from the steady-state stochastic simulation model,
\begin{equation}\label{eq:pool}
	\{X_{ji}; j=1,2,\cdots,R; i=1,2,\cdots,L\}
\end{equation}
with $R$ independent sample paths each with run-length $L$. Particularly, $X_{ji}$ denotes the $i$-th element of the $j$-th sample path output. Notice that different sample paths are independent with each other, but entries within each sample path are element-wise dependent. 
The pooled outputs in \eqref{eq:pool} contain 
$N=L\times R$ entries, which construct an empirical CDF:
\begin{equation}
	F_N(x) = \dfrac{1}{N}\sum_{j=1}^{R}\sum_{i=1}^{L}\mathbb{I}(X_{ji}\le x)
	, \label{eq:empcdf}
\end{equation}
where $\mathbb{I}(\cdot)$ is the indicator function. Let $X_{(1)}\le \cdots \le X_{(N)}$ denote the order statistics over all the entries in \eqref{eq:pool}.
We define the $\alpha$-level pooled quantile estimator obtained by combining all the entries by
\begin{equation} \label{eq:qhat_ad}
	\hat{\xi}^{(P)}_{\alpha} = X_{(\lceil N\alpha\rceil)},
\end{equation}
where $\lceil a\rceil$ denotes the smallest integer greater than or equal to $a$.

As noted earlier in Section~\ref{sec:introduction}, the classical quantile estimator (see \cite{chen2006quantile, bekki2009indirect} for examples) is to obtain the $\alpha$-level sample quantile from each replication, $\hat{\xi}_{j, \alpha} = X_{j,(\lceil L\alpha\rceil)}$ with $X_{j,(1)} \le \cdots \le X_{j,(L)}$ for $j=1, \ldots, R$. The final estimator is 
\begin{equation}
	\hat{\xi}^{(A)}_{\alpha} = \dfrac{1}{R}\sum_{j=1}^{R}\hat{\xi}_{j, \alpha},
	\label{eq:qhat_cl}
\end{equation}
which is referred to as the \textit{average quantile estimator} in this paper. The asymptotic properties of each individual $\hat{\xi}_{j, \alpha}$ have been studied by a vast collection of papers; see for example \cite{sen1972bahadur}. Thus, the corresponding asymptotic properties of the average quantile estimator can be directly developed based on the mutual independence among $\hat{\xi}_{j, \alpha}$ for $j=1,2,\ldots,R$. 

{The pooled quantile estimator has been investigated under independent and identically distributed (i.i.d.) observations, such as \cite{asmussen2007stochastic, nakayama2014confidence}.
	In \cite{alexopoulos2019sequest}, the comparison of the asymptotic properties between the pooled quantile estimator and the average quantile estimator based on i.i.d. observations have been stated. Also, the pooled estimator from dependent sequences is used to adjust the confidence interval of quantile estimators. However, the comparison of asymptotic  properties between the pooled quantile estimator and the average quantile estimator based on multiple replications of dependent sequences has not been formally stated in the literature. }
To fill this gap, we provide the theoretical comparison of the pooled quantile estimator and the average quantile estimator. 
Our theory is developed based on Assumptions 2.1--2.4.
\begin{assumption}\label{assumption1}
	For each replication $j$ ($=1,\cdots,R$), $\{X_{ji}; -\infty<i<\infty\}$ is a stationary sequence of $\phi$-mixing random variables, i.e., for the $\sigma-$fields $\mathscr{F}_{-\infty}^{k}$ and $\mathscr{F}_{k+n}^{\infty}$ generated by $\{X_{ji}; i\le k\}$ and $\{X_{ji}; i\ge k+n\}$ respectively, we have
	\begin{equation}
		\left|\mbox{P}(E_2\mid E_1)-\mbox{P}(E_2)\right|\le \phi(n),~~
		\mathrm{for}~~-\infty<k<\infty ~~\mathrm{and}~~n \ge 1\hfill \label{eq:mixing1}
	\end{equation}
	where $E_1 \in \mathscr{F}_{-\infty}^{k}$ and $E_2 \in \mathscr{F}_{k+n}^{\infty}$, and 
	$1\ge\phi(1)\ge\phi(2)\ge\cdots\geq 0$ with $\lim_{n\to\infty}\phi(n)=0$, and
	\begin{equation}
		\sum_{n=1}^{\infty}e^{tn}\phi(n)<\infty\ \mathrm{for}\ \mathrm{some}\ t>0.
		\label{eq:mixing2}
	\end{equation}
\end{assumption}

\begin{assumption}\label{assumption2}
	$R=o(L)$ as $L\rightarrow \infty$.
\end{assumption}

\begin{assumption}\label{assumption3}
	$F^\prime(x)=f(x)$ is continuous and  {positive} in the neighborhood of $\xi_\alpha$.
\end{assumption}

\begin{assumption}\label{assumption4}
	$f^\prime(x)=\dfrac{d}{dx}f(x)$ is positive and bounded in the neighborhood of $\xi_\alpha$.
\end{assumption}

Assumption~\ref{assumption1} is called the $\phi$-mixing condition, which is commonly 
adopted in the steady-state simulation output analysis \citep{chen2000batching, steiger2001convergence}. It states that the serial dependency decreases as the lag increases. 
The studies in {\cite{bradley2005basic} and \cite{bradley1986basic}  provided the results of  theoretically verifying the  $\phi$-mixing condition for some popular examples of dependent sequences, such as Markov chains, stationary Gaussian processes, and etc. }
Assumption~\ref{assumption2} is that we normally consider the run-length far larger than the number of replications, and it
also matches the situation in parallel computing where the number of
available processors is often less than the run-length for the steady-state simulation. Assumptions \ref{assumption3} and \ref{assumption4} are the common assumptions for developing asymptotic representations of quantile estimators.





\section{The asymptotic representation of the pooled quantile estimator}
\label{sec:main_results}

\begin{sloppypar}
	The asymptotic representation of sample quantile has been investigated under both independent data and dependent sequence data. For the pooled quantile estimator of simulation outputs from independent replications, asymptotic characterization is still missing in the literature. We aim to fill this gap and develop the asymptotic representation for the quantile estimator with pooled sample paths and provide the theoretical insights on how to deploy multiple processors to improve the estimation of system quantile response. We first provide the asymptotic representation of the pooled quantile estimator through Theorem~\ref{theorem1}, and then Theorem~\ref{theorem3} gives its asymptotic distribution. The proofs of these Theorems follows a similar logic as in  \cite{sen1972bahadur},
	and extend their results to incorporate multiple replications of dependent sample paths. 
\end{sloppypar}



\begin{theorem}\label{theorem1}
	Consider a small neighborhood around the true $\alpha$-level quantile $\xi_\alpha$, denoted by
	$I_N = \left\{x: \big|x - \xi_\alpha \big| \le N^{-1/2}\log L \right\}$.
	Under Assumptions 2.1--2.3, as $L\to \infty$,
	\begin{equation}
		\underset{x\in I_N}{\sup}\Big|[F_N(x)-F_N(\xi_\alpha)] - [F(x)-F(\xi_\alpha)] \Big| = O(N^{-3/4}\log L) 
		\label{eq:theorem1}
	\end{equation}
	almost surely. Further under Assumption 2.4, we have,
	\begin{equation}
		\Big|[\alpha-F_N(\xi_\alpha)] - (\hat{\xi}^{(P)}_{
			\alpha} - \xi_\alpha)f(\xi_\alpha) \Big| = O(N^{-3/4}\log L), \label{eq:theorem2}
	\end{equation}
	almost surely.
\end{theorem}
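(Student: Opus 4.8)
The plan is to follow the Bahadur-representation architecture of \cite{sen1972bahadur}, exploiting two structural features of the pooled data: independence \emph{across} the $R$ replications and $\phi$-mixing \emph{within} each replication. Writing $Y_{ji}(x) = [\mathbb{I}(X_{ji}\le x) - \mathbb{I}(X_{ji}\le\xi_\alpha)] - [F(x)-F(\xi_\alpha)]$, the centered process in \eqref{eq:theorem1} is $D_N(x) = N^{-1}\sum_{j=1}^{R} S_j(x)$ with $S_j(x) = \sum_{i=1}^{L} Y_{ji}(x)$, where the $S_j(x)$ are i.i.d.\ over $j$. The crucial observation driving the rate is \emph{variance localization}: since the two indicators defining $Y_{ji}(x)$ disagree only on an event of probability $|F(x)-F(\xi_\alpha)|$, Assumption~\ref{assumption3} gives $\mathrm{Var}(Y_{ji}(x)) = O(|x-\xi_\alpha|)$, and the $\phi$-mixing summability implied by \eqref{eq:mixing2} propagates this to $\mathrm{Var}(S_j(x)) = O(L\,|x-\xi_\alpha|)$. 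Combining over independent replications yields $\mathrm{Var}(N D_N(x)) = O(N|x-\xi_\alpha|)$, which on $I_N$ is $O(N^{1/2}\log L)$; the target deviation $N^{-3/4}\log L$ for $D_N$ is therefore exactly $\sqrt{\log L}$ standard deviations, the canonical Bahadur scaling.

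First I would establish the oscillation bound \eqref{eq:theorem1} for a single fixed $x\in I_N$ through an exponential tail inequality. The idea is to control the moment generating function of each $S_j(x)$ by a Bernstein-type blocking argument for $\phi$-mixing sequences---partitioning $\{1,\dots,L\}$ into alternating big and small blocks so that, thanks to the exponentially decaying $\phi(n)$ in \eqref{eq:mixing2}, the big blocks behave almost independently and the small blocks are asymptotically negligible---and then to multiply these bounds across the independent replications. Choosing the deviation threshold at $t = c\,N^{1/4}\log L$ makes the resulting bound on $\mathrm{P}(|ND_N(x)| > t)$ of order $L^{-c'}$ with $c'$ growing in $c$. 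Assumption~\ref{assumption2} enters here to ensure that the asymptotics are driven by the run-length $L\to\infty$ and that the replication count does not overwhelm the per-path mixing and blocking control.

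Next I would upgrade the pointwise bound to the uniform-over-$I_N$, almost-sure statement. Since $F_N$ and $F$ are both monotone, it suffices to discretize $I_N$ into a grid of polynomially many (in $L$) points with spacing small enough that the discretization error is absorbed into the $O(N^{-3/4}\log L)$ budget; a union bound over the grid keeps the failure probability summable in $L$ for $c$ large, and the Borel--Cantelli lemma then delivers the almost-sure conclusion \eqref{eq:theorem1}.

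Finally, for the Bahadur representation \eqref{eq:theorem2} I would first show that $\hat{\xi}^{(P)}_{\alpha}\in I_N$ almost surely for large $L$, using the strong consistency of $F_N$ together with the positivity of $f$ from Assumption~\ref{assumption3}. The sample-quantile definition \eqref{eq:qhat_ad} gives $|F_N(\hat{\xi}^{(P)}_{\alpha}) - \alpha| \le N^{-1}$, so evaluating \eqref{eq:theorem1} at $x = \hat{\xi}^{(P)}_{\alpha}$ converts the empirical increment into the population increment $F(\hat{\xi}^{(P)}_{\alpha}) - F(\xi_\alpha)$ up to $O(N^{-3/4}\log L)$. A one-term Taylor expansion, whose remainder is $O(|\hat{\xi}^{(P)}_{\alpha} - \xi_\alpha|^2) = O(N^{-1}(\log L)^2)$ and hence negligible by virtue of the bounded $f'$ in Assumption~\ref{assumption4}, replaces $F(\hat{\xi}^{(P)}_{\alpha}) - F(\xi_\alpha)$ by $f(\xi_\alpha)(\hat{\xi}^{(P)}_{\alpha} - \xi_\alpha)$, and rearranging yields \eqref{eq:theorem2}. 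The main obstacle throughout is the exponential inequality of the second paragraph: producing a $\phi$-mixing tail bound that carries the \emph{localized} $O(|x-\xi_\alpha|)$ variance proxy (rather than a crude $O(1)$ bound) cleanly through the blocking and the independent-replication product, since it is precisely this localization that yields the sharp $N^{-3/4}\log L$ rate rather than a weaker one.
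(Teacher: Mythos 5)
Your proposal is correct and follows essentially the same route as the paper: a localized exponential tail bound for the $\phi$-mixing sums (the paper's Lemma~2, whose proof uses an interleaved-subsequence/MGF decomposition rather than your big-block/small-block variant, but to the same effect), a discretization of $I_N$ into $O(N^{1/4})$ grid points with the Mean-Value Theorem absorbing the spacing error, a Bonferroni--Borel--Cantelli upgrade to the almost-sure uniform statement, and finally localization of $\hat{\xi}^{(P)}_{\alpha}$ to $I_N$ followed by substitution into \eqref{eq:theorem1} and a Taylor expansion controlled by Assumption~2.4. The variance-localization observation you identify as the crux is exactly what the hypothesis $\alpha_N \le K_2 N^{-1/2}\log L$ in the paper's Lemma~2 encodes.
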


\begin{proof}
	Let $\eta_{r,N} = \xi_\alpha + rN^{-3/4}\log L$, where $r = 0, \pm 1, \ldots, \pm b_N$, and $b_N = \lceil N^{1/4} \rceil$. Then for all $x \in J_{r,N} = [\eta_{r,N},\eta_{r+1,N}]$, we have that
	\begin{align*}
		&\underset{x\in I_N}{\sup}\Big|[F_N(x)-F_N(\xi_\alpha)] - [F(x)-F(\xi_\alpha)] \Big| \\\nonumber
		&\le \underset{-b_N\le r \le b_N}{\max}\Big|[F_N(\eta_{r,N})-F_N(\xi_\alpha)] - [F(\eta_{r,N})-F(\xi_\alpha)] \Big| \\\nonumber
		&\quad + \underset{-b_N\le r \le b_N-1}{\max} \big|F(\eta_{r+1,N}) - F(\eta_{r,N}) \big|. \nonumber 
	\end{align*}
	Since $\eta_{r+1,N} - \eta_{r,N} = N^{-3/4}\log L$, 
	by the Mean-Value Theorem, 
	\[
	\big|F(\eta_{r+1,N}) - F(\eta_{r,N}) \big| \le \bigg|\underset{x\in J_{r,N}}{\sup}f(x) \bigg|(\eta_{r+1,N} - \eta_{r,N}) = O(N^{-3/4}\log L),
	\]
	and
	$\underset{-b_N\le r \le b_N-1}{\max} \big|F(\eta_{r+1,N}) - F(\eta_{r,N}) \big|=O(N^{-3/4}\log L)$
	almost surely. 
	
	For $r=1,2,\ldots,b_N$, let $U_{ji}^{(r)} = \mathbb{I}(X_{ji}\le \eta_{r,N}) - \mathbb{I}(X_{ji}\le \xi_\alpha)$. Notice that  $U_{ji}^{(r)}$ is 0-1 valued, and such that,
	\[
	F_N(\eta_{r,N})-F_N(\xi_\alpha) = 
	\dfrac{1}{N}\sum_{j=1}^{R}\sum_{i=1}^{L}U_{ji}^{(r)} 
	\]
	and $F(\eta_{r,N})-F(\xi_\alpha) = \mbox{P}(U_{ji}^{(r)}=1) =: \alpha_N^{(r)}$.
	
	According to the Mean-Value Theorem and the definition of $b_N$, we get $K_1N^{-3/4}\log L \le p_N^{(r)}\le K_2N^{-1/2}\log L$. By directly applying Lemma~\ref{lemma2},
	we have that, as $L\to \infty$,
	\begin{equation}
		\mbox{P}\left\{\Big|[F_N(\eta_{r,N})-F_N(\xi_\alpha)] - [F(\eta_{r,N})-F(\xi_\alpha)] \Big|>CN^{-3/4}\log L \right\} \le C_2L^{-2}\nonumber
	\end{equation}
	if let $s=2$ in Lemma~\ref{lemma2}.
	
	For $r=-b_N,\ldots,-1$, let $U_{ji}^{(r)} = \mathbb{I}(X_{ji}\le \xi_\alpha) - \mathbb{I}(X_{ji}\le \eta_{r,N})$, and we could derive the same results. According to the Bonferroni Inequality,
	\begin{align}
		&\mbox{P}\Bigg\{\underset{-b_N\le r \le b_N}{\max} \Big|[F_N(\eta_{r,N})-F_N(\xi_\alpha)] - [F(\eta_{r,N})-F(\xi_\alpha)] \Big|
		>CN^{-3/4}\log L \Bigg\} 
		\nonumber \\
		& \le C_2\cdot 2b_N \cdot L^{-2} = O(L^{-3/2}) .\nonumber
	\end{align}
	Then, by Borel-Cantelli Lemma \cite{serfling2009approximation},
	\[
	\underset{-b_N\le r \le b_N}{\max}\Big|[F_N(\eta_{r,N})-F_N(\xi_\alpha)] - [F(\eta_{r,N})-F(\xi_\alpha)] \Big| = O(N^{-3/4}\log L)
	\] 
	holds almost surely. Then Equation~\eqref{eq:theorem1} holds. 
	
	\begin{sloppypar}
		
		We now prove \eqref{eq:theorem2}. 
		Let $k = \lceil N\alpha\rceil$,
		\begin{align}
			&\mbox{P} \left(\hat{\xi}^{(P)}_{\alpha} < \xi_\alpha - N^{-1/2}\log L \right) 
			= \mbox{P}\left\{\sum_{j=1}^{R}\sum_{i=1}^{L}\mathbb{I}\left(X_{ji} \le \xi_\alpha - N^{-1/2}\log L \right) \ge k \right\}\nonumber\\
			& = \mbox{P}\Bigg\{\dfrac{1}{N}\sum_{j=1}^{R}\sum_{i=1}^{L}W_{ji} - F\left(\xi_\alpha - N^{-1/2}\log L \right) 
			\ge \dfrac{k}{N} - F\left(\xi_\alpha - N^{-1/2}\log L \right) \Bigg\}, \nonumber 
		\end{align}
		where $W_{ji} = \mathbb{I}\left(X_{ji} \le \xi_\alpha - N^{-1/2}\log L \right)$ and $\mbox{P}\{W_{ji}=1\} =F\left(\xi_\alpha - N^{-1/2}\log L \right)$. Since as $L\to \infty$, 
		\[
		\dfrac{k}{N} - F\left(\xi_\alpha - N^{-1/2}\log L \right) = f(\xi_\alpha)N^{-1/2}\log L[1+o(1)].
		\]
		From Lemma~\ref{lemma1}, as $L\to \infty$, 
		\[
		\dfrac{1}{N}\sum_{j=1}^{R}\sum_{i=1}^{L}W_{ji} - \mbox{P}\{W_{ji}=1\} \le (\dfrac{2}{t})N^{-1/2}\log L,
		\]
		holds almost surely. Thus, we have that
		\begin{equation}
			\hat{\xi}^{(P)}_{\alpha} \ge \xi_\alpha - N^{-1/2}\log L \label{neq:theo2_part1}
		\end{equation}
		holds almost surely.
		Similarly, 
		\[
		\mbox{P} \left(\hat{\xi}^{(P)}_{\alpha} > \xi_\alpha + N^{-1/2}\log L \right) 
		= \mbox{P}\left\{\dfrac{1}{N}\sum_{j=1}^{R}\sum_{i=1}^{L} \mathbb{I}\left(X_{ji} \le \xi_\alpha + N^{-1/2}\log L \right) < \dfrac{k}{N} \right\}. 
		\]
		By the monotonicity of $F(\cdot)$, 
		as $L\to \infty$, 
		\[\dfrac{1}{N}\sum_{j=1}^{R}\sum_{i=1}^{L} \mathbb{I}\left(X_{ji} \le \xi_\alpha + N^{-1/2}\log L \right)\to F(\xi_\alpha + N^{-1/2}\log L),\]
		and $\dfrac{k}{N}\to F(\xi_\alpha)$. Thus, we have that,
		\begin{equation}
			\hat{\xi}^{(P)}_{\alpha} \le \xi_\alpha + N^{-1/2}\log L \label{neq:theo2_part2}
		\end{equation} 
		as $L\rightarrow \infty$ almost surely.
		Therefore, under Assumption 2.4, the conclusion holds by setting  $x=\hat{\xi}^{(P)}_{\alpha}$ in 
		\eqref{eq:theorem1}.
		
	\end{sloppypar}
\end{proof}

Notice that Equation~\eqref{eq:theorem2} can be rewritten as,
\begin{equation}
	\hat{\xi}^{(P)}_{\alpha} - \xi_\alpha = \dfrac{\alpha-F_N(\xi_\alpha)}{f(\xi_\alpha)} + O(N^{-3/4}\log L), \label{eq:baha_ad}
\end{equation}
which gives the Bahadur representation of sample quantile of the pooled sample paths.
Now we consider the asymptotic distribution of the estimator $\hat{\xi}^{(P)}_{\alpha}$. Let $F_j(x) = \dfrac{1}{L}\sum_{i=1}^{L}\mathbb{I}(X_{ji}\le x)$ for $j=1,\ldots,R$, and $F_N(x) = \dfrac{1}{R}\sum_{j=1}^{R}F_j(x)$.  
Following the general definition in literature (e.g., \cite{sen1972bahadur}),  
we denote
\begin{equation}
	v^2 = v_{0} + 2\sum_{h=1}^{\infty}v_{h}, \label{eq:v_jh}
\end{equation}
where $v_{h} = \mbox{E}\left[\mathbb{I}(X_{j,1}\le \xi_\alpha)\mathbb{I}(X_{j,1+h}\le \xi_\alpha) \right] - \alpha^2$, which is the same for all replications with $j=1,2,\ldots,R$. 
Under the setting of pooled sample paths in this paper, 
we obtain that
\[
\underset{L\to \infty}{\lim}\left\{N\cdot\mbox{Var}[F_N(\xi_\alpha)] \right\}
=
\underset{L\to \infty}{\lim}\left\{L\cdot\mbox{Var}[F_j(\xi_\alpha)] \right\} 
= v^2. \nonumber 
\]

\begin{theorem}\label{theorem3}
	Under Assumptions 2.1--2.3, and {$\sigma^2 := v^2/[f(\xi_\alpha)]^2$, $0<\sigma^2<\infty$},
	\begin{equation}
		\dfrac{N^{1/2}(\hat{\xi}^{(P)}_{\alpha} - \xi_\alpha)}{\sigma} \overset{d}{\to} \mathcal{N}(0,1) \label{eq:theorem3}.
	\end{equation}
	
\end{theorem}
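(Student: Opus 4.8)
The plan is to reduce the claim to a central limit theorem for the pooled empirical CDF $F_N(\xi_\alpha)$ and then prove that CLT by exploiting the block structure of the pooled data: independence across the $R$ replications together with $\phi$-mixing dependence inside each replication.

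First I would start from the Bahadur representation \eqref{eq:baha_ad} supplied by Theorem~\ref{theorem1}. Multiplying it by $N^{1/2}$ gives
\[
N^{1/2}\big(\hat{\xi}^{(P)}_{\alpha} - \xi_\alpha\big) = \frac{N^{1/2}\big(\alpha - F_N(\xi_\alpha)\big)}{f(\xi_\alpha)} + O\big(N^{-1/4}\log L\big).
\]
Since $N = LR \ge L$, the remainder is at most $O(L^{-1/4}\log L)$, which tends to $0$ almost surely. By Slutsky's theorem it therefore suffices to establish
\[
N^{1/2}\big(\alpha - F_N(\xi_\alpha)\big) \overset{d}{\to} \mathcal{N}(0, v^2),
\]
because dividing by the positive constant $f(\xi_\alpha)$ rescales the limiting variance to $v^2/[f(\xi_\alpha)]^2 = \sigma^2$, and the final standardisation by $\sigma$ produces $\mathcal{N}(0,1)$ as in \eqref{eq:theorem3}.

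Next I would expose the block structure. Writing $Y_j := \sum_{i=1}^{L}\big(\mathbb{I}(X_{ji}\le\xi_\alpha) - \alpha\big)$, we have $N^{1/2}(\alpha - F_N(\xi_\alpha)) = -N^{-1/2}\sum_{j=1}^{R} Y_j$. Because distinct sample paths are independent, the $Y_j$ are i.i.d.\ with mean zero, while within a path the summands form a stationary $\phi$-mixing sequence. The exponential decay \eqref{eq:mixing2} yields the covariance bound $|v_h| \le 2\phi(h)$ and hence $\sum_{h\ge 1}|v_h| < \infty$; the usual variance formula for stationary sums then gives $\mathrm{Var}(Y_1)/L \to v^2$ with $v^2$ as in \eqref{eq:v_jh}, matching the variance identity recorded just before the theorem. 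Consequently $\mathrm{Var}\big(N^{-1/2}\sum_j Y_j\big) = \mathrm{Var}(Y_1)/L \to v^2$.

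I would finish with a triangular-array central limit theorem applied to the row-wise independent, mean-zero variables $\zeta_{j,L} := N^{-1/2}Y_j$, $j = 1,\dots,R$, whose variances sum to $s_L^2 = \mathrm{Var}(Y_1)/L \to v^2 > 0$. Setting $V_L := Y_1/\sqrt{L}$, a short rescaling (using $\sqrt{N}/\sqrt{L} = \sqrt{R}$) reduces the Lindeberg condition to showing $\mathrm{E}\big[V_L^2\,\mathbb{I}(|V_L| > \epsilon\sqrt{R})\big]\to 0$ for each $\epsilon>0$. The classical central limit theorem for stationary $\phi$-mixing sequences gives $V_L \overset{d}{\to} \mathcal{N}(0,v^2)$, and combined with $\mathrm{E}[V_L^2]\to v^2$ this forces $\{V_L^2\}$ to be uniformly integrable. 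The main obstacle is precisely the genuinely double-limit nature of the array: the law of each block itself varies with $L$, so this is not a sum of fixed i.i.d.\ blocks, and the Lindeberg verification hinges on the uniform-integrability input above. When $R\to\infty$ the truncation level $\epsilon\sqrt R$ diverges and uniform integrability drives the Lindeberg quantity to $0$, so the Lindeberg--Feller theorem gives $\sum_j \zeta_{j,L}\overset{d}{\to}\mathcal{N}(0,v^2)$, and hence $N^{1/2}(\alpha-F_N(\xi_\alpha)) = -\sum_j\zeta_{j,L}$ has the same symmetric Gaussian limit. Because Assumption~\ref{assumption2} also permits $R$ to remain bounded --- the practically relevant regime of finitely many processors, where the truncation level does not diverge and Lindeberg may fail --- I would treat that case directly: writing $\sum_j \zeta_{j,L} = R^{-1/2}\sum_{j=1}^{R} V_L^{(j)}$ with i.i.d.\ copies $V_L^{(j)}$, joint convergence of the fixed-length vector $(V_L^{(1)},\dots,V_L^{(R)})$ to independent $\mathcal{N}(0,v^2)$ coordinates and the continuous mapping theorem give the same Gaussian limit. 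A subsequence argument then patches the two regimes, since both yield $\mathcal{N}(0,v^2)$.
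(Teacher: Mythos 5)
Your proof is correct and follows the same overall strategy as the paper --- reduce to a central limit theorem for $N^{1/2}[F_N(\xi_\alpha)-\alpha]$ and transfer it to $\hat{\xi}^{(P)}_{\alpha}$ --- but the two executions differ in instructive ways. The paper does not invoke the Bahadur representation \eqref{eq:baha_ad}; it works with $N^{1/2}[F_N(\hat{\xi}^{(P)}_{\alpha})-F(\hat{\xi}^{(P)}_{\alpha})]$, uses only the oscillation bound \eqref{eq:theorem1} together with the containment $|\hat{\xi}^{(P)}_{\alpha}-\xi_\alpha|\le N^{-1/2}\log L$, and handles the density via the mean-value theorem and the ratio $f(\theta\hat{\xi}^{(P)}_{\alpha}+(1-\theta)\xi_\alpha)/f(\xi_\alpha)\overset{p}{\to}1$, which keeps the argument within Assumptions 2.1--2.3 as the theorem statement requires. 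Your route through \eqref{eq:baha_ad} implicitly imports Assumption 2.4, since the second half of Theorem~\ref{theorem1} needs $f'$ bounded near $\xi_\alpha$; this is a minor mismatch with the stated hypotheses rather than an error, but worth flagging. On the other hand, your treatment of the step the paper disposes of with the phrase ``by the independence between replications'' is genuinely more careful: when $R$ grows with $L$, passing from $L^{1/2}[F_j(\xi_\alpha)-\alpha]/v\overset{d}{\to}\mathcal{N}(0,1)$ to the pooled statement is a triangular-array problem, and your Lindeberg--Feller verification via uniform integrability of $V_L^2$ (obtained from distributional convergence plus convergence of second moments), together with the separate fixed-$R$ case and the subsequence patching, supplies exactly the argument the paper elides. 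In short, the paper's proof is leaner on assumptions for the final Slutsky step, while yours is more rigorous on the aggregation across replications.
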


\begin{proof}
	According to Theorem~\ref{theorem1}, 
	we have 
	\begin{equation}
		N^{1/2}[F_N(\hat{\xi}^{(P)}_{\alpha})-F(\hat{\xi}^{(P)}_{\alpha})] \overset{p}{\to} N^{1/2}[F_N(\xi_\alpha) - \alpha] \label{eq:theo3_part1}
	\end{equation}
	as $L\to \infty$.
	By the central limit theorem for $\phi-$mixing variables (see  \cite{sen1972bahadur} for example), 
	$
	L^{1/2}[F_j(\xi_\alpha) - \alpha]/v \overset{d}{\to} \mathcal{N}(0,1)$
	and by the independence between replications,
	\begin{equation}
		\dfrac{N^{1/2}[F_N(\xi_\alpha) - \alpha]}{v} \overset{d}{\to} \mathcal{N}(0,1).
		\label{eq:theo3_part2}
	\end{equation}
	On the other hand, $F_N(\hat{\xi}^{(P)}_{\alpha}) = k/N = \alpha + O(N^{-1}) = F(\xi_\alpha) + O(N^{-1})$. As $L\to \infty$,
	\begin{align}
		&N^{1/2}[F_N(\hat{\xi}^{(P)}_{\alpha})-F(\hat{\xi}^{(P)}_{\alpha})] = N^{1/2}[F(\xi_\alpha)-F(\hat{\xi}^{(P)}_{\alpha})] + O(N^{-1/2}) \nonumber \\
		& = \left[N^{1/2}(\xi_\alpha - \hat{\xi}^{(P)}_{\alpha})f(\xi_\alpha)\right]
		\dfrac{f(\theta\hat{\xi}^{(P)}_{\alpha} + (1-\theta)\xi_\alpha)}{f(\xi_\alpha)} + O(N^{-1/2}) \label{eq:theo3_part3}
	\end{align}
	where $\theta \in [0,1]$. Since $f(x)$ is continuous in some neighborhood of $\xi_\alpha$, $0<f(\xi_\alpha)<\infty$, and from Theorem~\ref{theorem1}, $\big|\xi_\alpha - \hat{\xi}^{(P)}_{\alpha} \big|\le N^{-1/2}\log L$, then as $L\to \infty$,
	$
	f(\theta\hat{\xi}^{(P)}_{\alpha} + (1-\theta)\xi_\alpha)/f(\xi_\alpha) \overset{p}{\to} 1$.
	By applying \eqref{eq:theo3_part1}, \eqref{eq:theo3_part2} and \eqref{eq:theo3_part3}, and the Slutsky's Theorem, \eqref{eq:theorem3} holds.
\end{proof}

\begin{sloppypar}
	According to \eqref{eq:baha_ad} and \eqref{eq:theorem3}, for the sample 
	$\alpha$-quantile $\hat{\xi}^{(P)}_{\alpha}$ given by \eqref{eq:qhat_ad}, we have the following asymptotic bias and variance:
	\begin{equation}
		\mbox{Bias}(\hat{\xi}^{(P)}_{\alpha}) = O(N^{-3/4}\log L), 
		~\mathrm{and}~\mbox{Var}(\hat{\xi}^{(P)}_{\alpha}) = \dfrac{v^2}{N[f(\xi_\alpha)]^2}. \label{eq:bv_ad}
	\end{equation}
	On the other hand, for the classical sample quantile $\hat{\xi}^{(A)}_{\alpha}$ given by \eqref{eq:qhat_cl}, we can 
	directly apply the results from \cite{sen1972bahadur} to obtain 
	\begin{equation}
		\mbox{Bias}(\hat{\xi}^{(A)}_{\alpha})  = O(L^{-3/4}\log L) 
		~\mathrm{and}~
		\mbox{Var}(\hat{\xi}^{(A)}_{\alpha}) = \dfrac{v^2}{N[f(\xi_\alpha)]^2}, \label{eq:bv_cl}
	\end{equation}
	by the mutual independence among different replications. 
	\textit{Asymptotically, $\hat{\xi}^{(P)}_{\alpha}$ achieves the same variance as $\hat{\xi}^{(A)}_{\alpha}$, while
		the bias of  $\hat{\xi}^{(P)}_{\alpha}$ is in a smaller order 
		than the bias of  $\hat{\xi}^{(A)}_{\alpha}$.} We see that the bias of $\hat{\xi}^{(A)}_{\alpha}$ decreases as we increase the run-length, whereas stays the same as we increase the number of replications. Different from the classical quantile estimator, the bias of $\hat{\xi}^{(P)}_{\alpha}$ decreases as we increase either the run-length or the number of replications. Comparing these two estimators using the mean squared error (MSE), the pooled quantile estimator can achieve smaller MSE than the classical average quantile estimator.
\end{sloppypar}

As discussed earlier, 
multiple replications could be assigned to parallel processors. Thus, 
given a tight decision time, 
different replications can be allocated to parallel processors to provide a quantile estimator under this time constraint.
With that said, the fixed decision time implies that the run-length is fixed to be $L$, while the number of replications $R$ can be increased depending on how many parallel processors are available. Under this situation, 
the bias of $\hat{\xi}^{(A)}_{\alpha}$ stays at the same order no matter how many processors have been adopted to enlarge the number of replications. Different from $\hat{\xi}^{(A)}_{\alpha}$,  the bias of $\hat{\xi}^{(P)}_{\alpha}$ decreases as we increase the number of parallel processors. 
In Section \ref{sec:num}, we use an empirical example to demonstrate that the pooled quantile estimator outperforms the classical average quantile estimator when the run-length is insufficient due to an urgent deadline.

\section{Numerical Study}\label{sec:num}

In this section, we provide an empirical example to illustrate the performance of the proposed approach 
under the parallel computing setting. We use MSE to demonstrate the performance of different estimators. 
In the following examples, the MSE is computed with 100 micro-replications:
\begin{equation}
	\widehat{\mbox{MSE}}(\hat{\xi}_{\alpha}) =
	\dfrac{1}{100}
	\sum_{m=1}^{100}\left(\hat{\xi}_{\alpha}^{(m)} - \xi_{\alpha}\right)^2
	\nonumber
\end{equation}
where $\hat{\xi}_{\alpha}^{(m)}$ is the estimator for $\alpha$-level quantile from the pooled or the classical average method at the $m$-th micro-replication. Two examples AR(1) and M/M/1 queue are used to generate the dependent sequences. 
We consider that $R$ parallel processors are available to use, and the simulation running in each processor generate one replication of the dependent sequence. Two situations are considered: (1) there is an urgent deadline to provide a quantile estimator, so we only have time to generate dependent sequences with run-length $L=1000$; and (2) the deadline to provide a quantile estimator is not urgent, and we are able to generate dependent sequences with sufficient run-length $L=10000$. \\

\noindent{\bf Example 1: AR(1)} We consider an AR(1) process. For one replication of the dependent sequence, the outputs are given by
\[
X_{i}=\mu+\phi X_{i-1}+\varepsilon_i,
\]
where $\varepsilon_i$ is a white noise process with zero mean and variance $\sigma^2$.  We fix the mean $\mu=0$ and variance $\sigma^2=1$.
The correlation parameter $\phi$ is varying from 0.3, 0.5, to 0.9. We estimate the quantiles $\xi_{\alpha}$ of the outputs with $\alpha=$50\%, 95\%, and compare the performance of the pooled estimator in \eqref{eq:qhat_ad} with the classical average quantile estimator in \eqref{eq:qhat_cl}. 
The results of MSEs under urgent deadline and non-urgent deadline are given in Figures \ref{fg:ar1_urgent} and \ref{fg:ar1_non_urgent}, respectively.

\begin{figure}[!ht]
	\centering
	\includegraphics[scale=0.6]{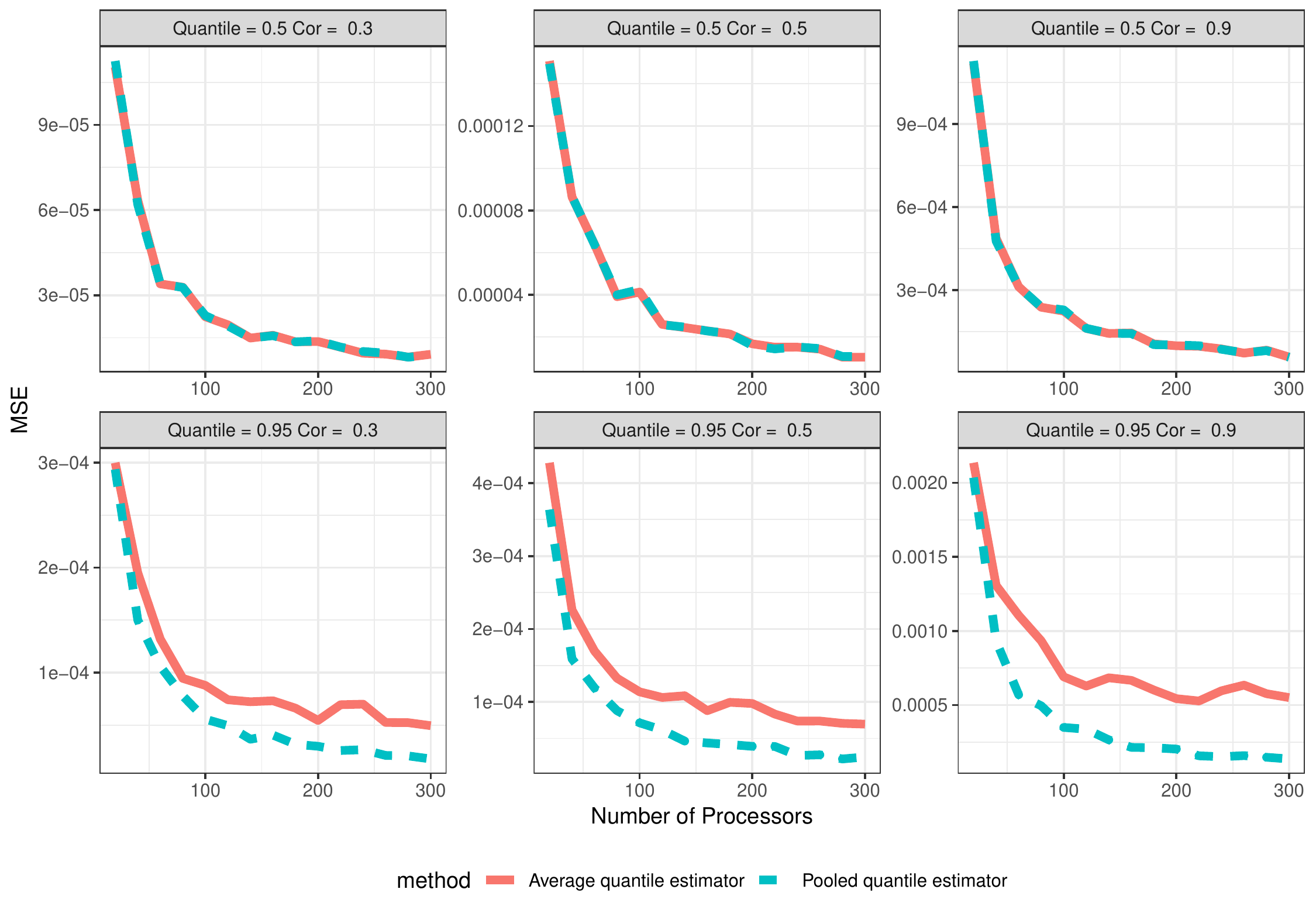}
	\caption{Quantile estimates under urgent deadline ($L=1000$) of the AR(1) example. This figure demonstrates the situation that the experimenter encounters an urgent deadline to provide a quantile estimator, and the time constraint does not allow the run-length on each processor to be sufficient.}\label{fg:ar1_urgent}
\end{figure}

\begin{figure}[!ht]
	\centering
	\includegraphics[scale=0.6]{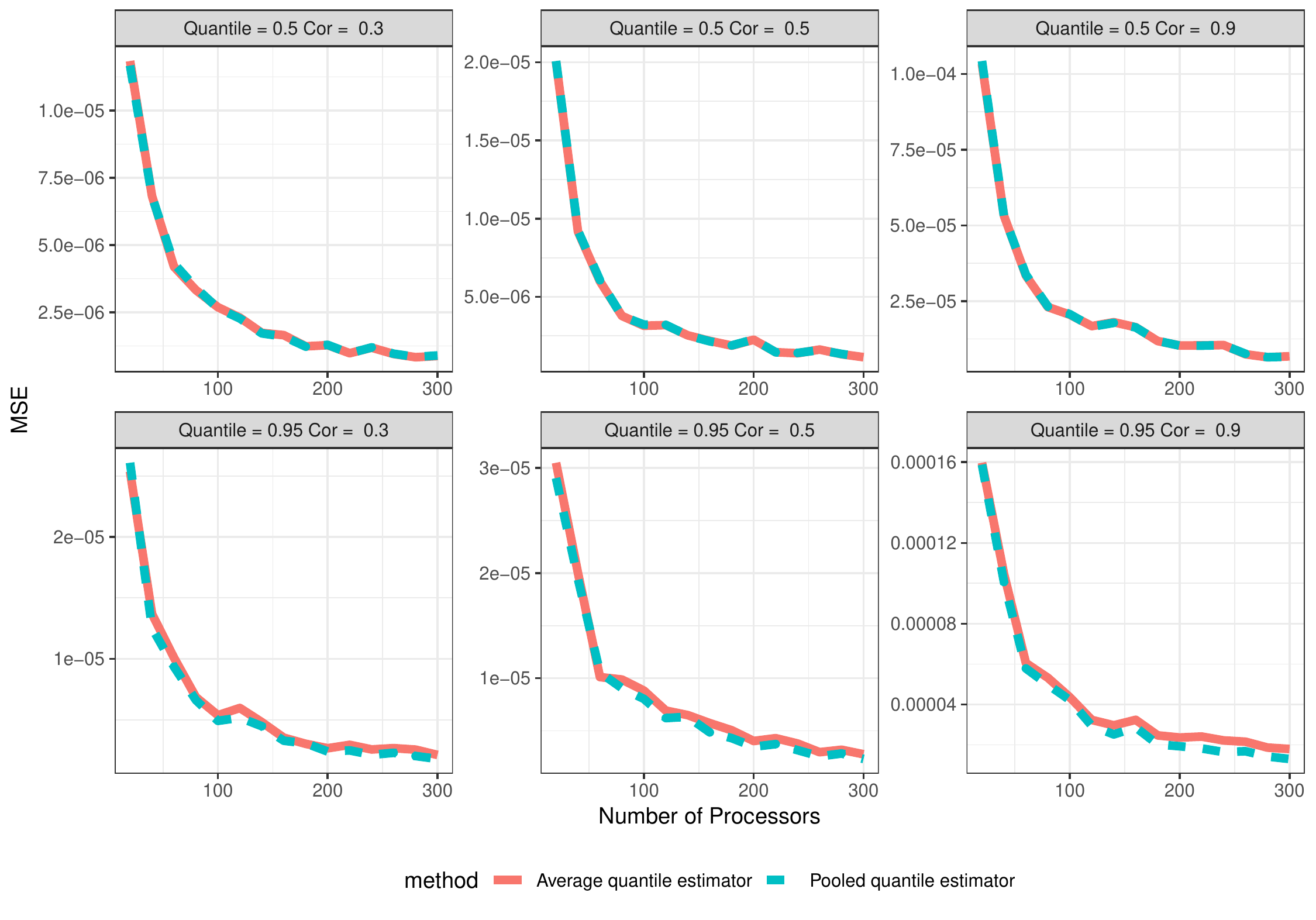}
	\caption{Quantile estimates under non-urgent deadline ($L=10000$) of the AR(1) example. This figure demonstrates that there is no time constraint to generate the quantile estimator, and the run-length can be sufficient to guarantee the accuracy.}\label{fg:ar1_non_urgent}
\end{figure}

\noindent{\bf Example 2: M/M/1 Queue}
We consider the steady-state $M/M/1$ Queueing system. We fix the arrival rate to be 1, and vary the utilization (traffic intensity) to be 0.7 or 0.9. We estimate the quantiles $\xi_{\alpha}$ of time staying in the system  with $\alpha=$50\%, 95\%, and compare the performance of the pooled estimator in \eqref{eq:qhat_ad} with the classical average quantile estimator in \eqref{eq:qhat_cl}. 
The results of MSEs under urgent deadline and non-urgent deadline are given in Figures \ref{fg:mm1_urgent} and \ref{fg:mm1_non_urgent}, respectively. 


The results shown in Figures  \ref{fg:ar1_urgent}  and  \ref{fg:mm1_urgent}  represent the performances of the quantile estimators under an urgent deadline (i.e., run-length $L=1000$),
whereas the results shown in Figures  \ref{fg:ar1_non_urgent}  and  \ref{fg:mm1_non_urgent}  represent the performances of the quantile estimators under a non urgent deadline (i.e., run-length $L=10000$).
The y-axis represents the estimated MSE and the x-axis is the number of processors $R$, and different scenarios are labeled on top of each sub-figure. 
For the cases representing an urgent deadline, the pooled quantile estimator gives smaller or competitive MSEs compared to the average quantile estimator. This demonstrates the benefits of using the pooled quantile estimator when there is not sufficient time to generate a lengthy sequence.  For the cases representing a non-urgent deadline, we have sufficient time to generate dependent sequences with a relatively large run-length on each processor. As also demonstrated in the theoretical comparison, the MSEs from the pooled estimator significantly outperform the average quantile estimator if the run-length is sufficient.

\begin{figure}[!ht]
	\centering
	\includegraphics[scale=0.7]{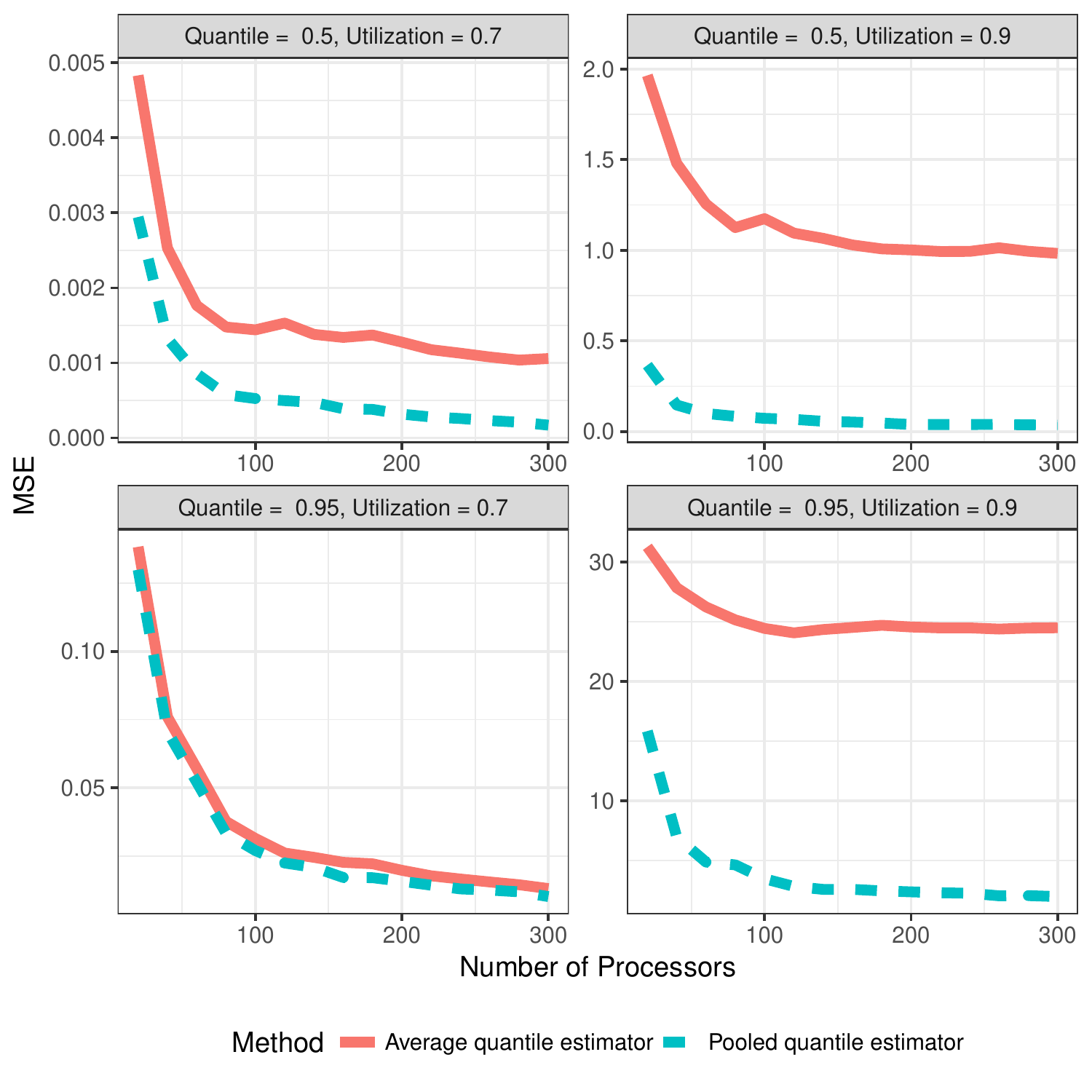}
	\caption{Quantile estimates under urgent deadline ($L=1000$) of the M/M/1 example. This figure demonstrates the situation that the experimenter encounters an urgent deadline to provide a quantile estimator, and the time constraint does not allow the run-length on each processor to be sufficient.}\label{fg:mm1_urgent}
\end{figure}

\begin{figure}[!ht]
	\centering
	\includegraphics[scale=0.7]{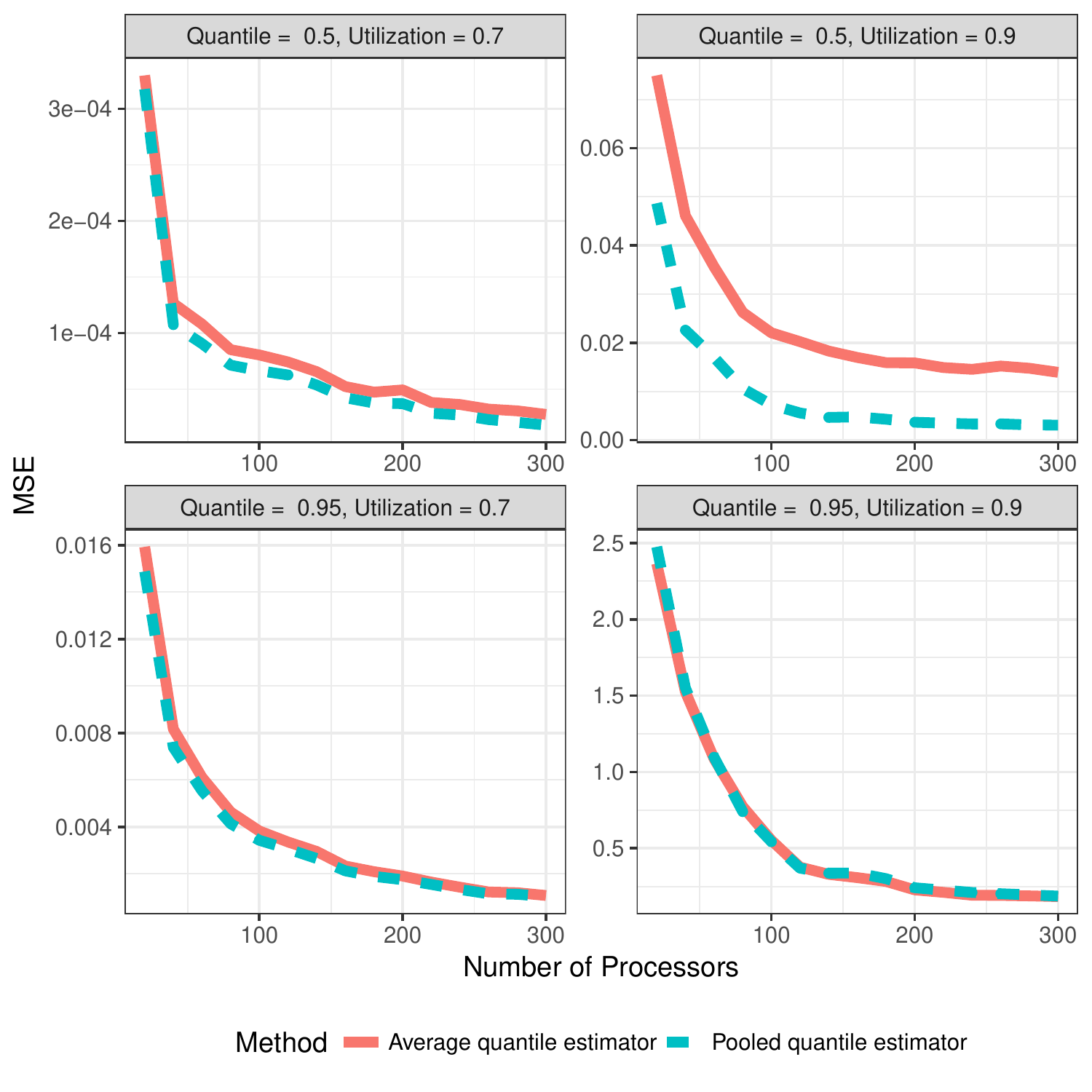}
	\caption{Quantile estimates under non-urgent deadline ($L=10000$) of the M/M/1 example. This figure demonstrates that there is no time constraint to generate the quantile estimator, and the run-length can be sufficient to guarantee the accuracy.}\label{fg:mm1_non_urgent}
\end{figure}

\section{Discussion}
As a summary, we study the pooled quantile estimator, which takes the sample quantile by pooling independently generated sample paths together. We develop the asymptotic representation of the proposed quantile estimator generated from multiple replications of dependent sequences. Compared with the classical average quantile estimator, the pooled quantile estimator demonstrates better asymptotic and finite-sample performance under the context of parallel simulation. 
The pooled quantile estimator can be advanced by combining various existing variance reduction techniques in the literature. Hence, as a promising future direction, the accuracy of the pooled quantile estimator can be further improved by incorporating existing techniques, such as control variates (i.e., \cite{hsu1990control, hesterberg1998control}), importance sampling and stratified sampling
(i.e., \cite{glynn1996importance, glasserman2000variance, sun2010asymptotic}), antithetic variates and Latin hypercube sampling
(i.e., \cite{avramidis1998correlation, jin2003probabilistic}), as well as bias correction techniques (i.e., \cite{matthys2004estimating, gomes2006bias, gomes2007sturdy}). {Also, the $\phi$-mixing condition used in our development can be difficult to check as mentioned in \cite{alexopoulos2019sequest}. It is a promising future direction to extend the asymptotic properties in this paper based on the milder conditions in \cite{wu2005bahadur}. Also, the trade-off between one single replication with multiple replications is an important issue to address for future study. If there is a choice to split one single replication into multiple replications, we may not simply assume that the total simulation cost is $N=R\times L$. The effort spending on the warm-up simulation procedure to achieve steady-state should also be considered. For the case with multiple replications, the warm-up procedure to generate steady-state outputs can not be negligible. It is critically important to further investigate how to balance the trade-off between one single replication with multiple replications.}

\bibliographystyle{unsrt} 
\bibliography{paperA}
%


\appendix

\section{Some Useful Lemmas}
We first consider that $\{Y_{ji}: j=1,2,\ldots,R; i=1,2,\ldots,L\}$ are series of 0-1 valued random variables which satisfies the same mixing condition as $\{X_{ji}\}$ given through \eqref{eq:mixing1} and \eqref{eq:mixing2}, and sharing the same marginal distribution $\mbox{P}(Y_{ji}=1) = 1-\mbox{P}(Y_{ji}=0)=\alpha$. Assume that $S_j=\sum_{i=1}^{L}Y_{ji}$ and $S_N=\sum_{j=1}^{R}S_j$, then we can have the following lemma by extending the results in Section 4 of 
\cite{sen1972bahadur}.

\begin{lemma}\label{lemma1}
	For a positive $t$ ($t<3$) that the $\phi-$mixing condition holds, as $L\to\infty$ and $R=o(L)$,
	\begin{equation}
		S_N\le N\alpha+(\dfrac{2}{t})N^{1/2}\log L,\ w.p.1 \label{eq:lemma1}
	\end{equation}
\end{lemma}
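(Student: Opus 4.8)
The plan is to prove the one-sided large-deviation bound $\mbox{P}(S_N - N\alpha > x)$ for the threshold $x = (2/t)N^{1/2}\log L$ at a rate summable in $L$, and then upgrade it to an almost-sure statement via the Borel--Cantelli Lemma. Since the summands are $\{0,1\}$-valued with $\mbox{E}[Y_{ji}] = \alpha$, we have $\mbox{E}[S_N] = N\alpha$, so \eqref{eq:lemma1} is a statement about the centered sum $S_N - \mbox{E}[S_N]$. The natural device is the exponential Markov (Chernoff) inequality: for every $\theta > 0$,
\[
\mbox{P}\left(S_N - N\alpha > x\right) \le e^{-\theta x}\,\mbox{E}\!\left[e^{\theta(S_N - N\alpha)}\right].
\]

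First I would factor the moment generating function across replications. Because the $R$ sample paths are mutually independent and share the same law,
\[
\mbox{E}\!\left[e^{\theta(S_N - N\alpha)}\right] = \left(\mbox{E}\!\left[e^{\theta(S_1 - L\alpha)}\right]\right)^{R},
\]
which reduces the problem to a single-replication estimate. The key step is to show, using the $\phi$-mixing condition with the exponential decay \eqref{eq:mixing2}, a sub-Gaussian-type bound $\mbox{E}[e^{\theta(S_1 - L\alpha)}] \le \exp(c\,L\,\theta^2)$ valid for $\theta$ in a range governed by $t$, with $c$ controlled by $\sum_n e^{tn}\phi(n)$. Raising to the power $R$ and using $N = RL$ gives $\mbox{E}[e^{\theta(S_N - N\alpha)}] \le \exp(c\,N\,\theta^2)$.

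With this estimate, the natural choice is $\theta = t\,N^{-1/2}$, which makes the linear exponent exactly $\theta x = t N^{-1/2}\cdot (2/t)N^{1/2}\log L = 2\log L$, so $e^{-\theta x} = L^{-2}$, while the quadratic contribution $c N\theta^2 = c t^2$ is a bounded constant; this is precisely where the explicit factor $2/t$ in the statement originates. The resulting bound $\mbox{P}(S_N - N\alpha > x) \le C L^{-2}$ is summable in $L$, so Borel--Cantelli yields \eqref{eq:lemma1} with probability one. Assumption~\ref{assumption2}, $R = o(L)$, enters to ensure that $\log N$ and $\log L$ are of the same order (so the threshold written in terms of $\log L$ is the correct scale) and to keep the accumulated mixing error across the $R$ replications negligible.

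The main obstacle is the single-replication sub-Gaussian bound, since the summands $Y_{1i}$ are dependent and the MGF does not factor over $i$. The remedy, which is the part of the argument in \cite{sen1972bahadur} that must be extended, is a big-block/small-block decomposition in which well-separated blocks are nearly independent up to a factor controlled by $\phi(\cdot)$; summing these mixing errors is exactly where the exponential rate $\sum_n e^{tn}\phi(n)<\infty$ is indispensable. The same mechanism fixes the admissible range of $\theta$ and thereby the technical restriction $0 < t < 3$, under which both the block approximation and the finiteness of $c$ are guaranteed.
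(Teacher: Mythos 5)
Your skeleton is the same as the paper's: exponential Markov (Chernoff) inequality, factorization of the moment generating function across the $R$ independent replications, a blocking device that exploits the exponential $\phi$-mixing rate to control within-path dependence, a tilting parameter of order $N^{-1/2}$, and Borel--Cantelli. Two details deserve correction, though. First, the single-replication bound the paper actually proves is \emph{not} the clean sub-Gaussian estimate $\exp(c\,L\,\theta^2)$ that you assert: the interlacing decomposition $S_j=\sum_{\ell=1}^{n_L}S_j^{(\ell)}$ with gap $n_L=\lceil (2/t)\log L\rceil$, combined with the arithmetic--geometric mean inequality, forces the MGF of each sparse subsequence to be evaluated at $\theta n_L$ rather than $\theta$, so the quadratic term in the exponent picks up an extra factor $n_L\sim\log L$; concretely the paper ends with exponent $-(1-t/4)\log L/[\alpha(1-\alpha)]$ rather than $-2\log L+O(1)$. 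Your choice $\theta=tN^{-1/2}$ still survives this loss (the quadratic contribution becomes $\alpha(1-\alpha)\,t\log L\le (t/4)\log L$, leaving a summable tail $L^{-2+t/4}$), so your conclusion is unaffected --- but the lemma you would actually need to prove, and the one the blocking argument delivers, is the weaker $\log L$-inflated version, and since you have deferred precisely this step, you should be aware that the stronger form is not what falls out of the standard construction. Second, the restriction $t<3$ does not come from the admissible range of $\theta$ or from the finiteness of the sub-Gaussian constant; it comes from the final exponent comparison: since $\alpha(1-\alpha)\le 1/4$, the tail bound is $L^{-r}$ with $r\ge 4-t$, and $t<3$ is exactly what guarantees $r>1$ so that the probabilities are summable for Borel--Cantelli. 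Your account of the role of $R=o(L)$ (absorbing the accumulated $o(L^{-2})$ mixing corrections) is correct.
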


\begin{proof}
	By Markov inequality, we would have,
	\[
	\mbox{P}\left(S_N>N\alpha+(\dfrac{2}{t})N^{1/2}\log L\right) \le
	\underset{h>0}{\inf}\left\{ \exp\left[-hN\alpha-(\dfrac{2}{t})hN^{1/2}\log L\right]
	\mbox{E}[\exp(hS_N)]\right\} \nonumber
	\]
	And we can rewrite
	$S_j = S_j^{(1)} + S_j^{(2)} + \cdots + S_j^{(k_L)}$
	with
	$S_j^{(\ell)} = Y_{j\ell} + Y_{j,{\ell+n_L}} + \cdots + Y_{j,{\ell+m_L^{(\ell)}n_L}}$,
	and choose integer $n_L=\lceil(\dfrac{2}{t})\log L\rceil$, $1\le \ell \le n_L$, and $m_L^{(\ell)}$ be the largest positive integer s.t. $\ell+m_L^{(\ell)}n_L \le L$, notice that $m_L^{(\ell)} \le m_L^{(1)} \le L/n_L - 1$. Then from the independence between replications and inequality between arithmetic and geometric means, we have,
	\begin{align}
		&\mbox{E}[\exp(hS_N)] = \prod_{j=1}^{R}\mbox{E}[\exp(hS_j)]
		= \prod_{j=1}^{R}\mbox{E}\left[\prod_{\ell=1}^{n_L}\exp(hS_j^{(\ell)})\right] \nonumber \\
		&\le \prod_{j=1}^{R}\mbox{E}\left[\left(\dfrac{\sum_{\ell=1}^{n_L}\exp(hS_j^{(\ell)})}{n_L}\right)^{n_L}\right]
		\le \left\{\mbox{E}[\exp(hn_LS_1^{(1)})]\right\}^R \nonumber
	\end{align}
	
	\begin{sloppypar}
		
		According to the $\phi-$mixing condition \eqref{eq:mixing1}, for every $i$,
		$\mbox{P}(Y_{j,i+n_L}=1\mid \mathscr{F}_{-\infty}^{i})\le \alpha + \phi(n_L)$
		and from condition \eqref{eq:mixing2} and the choice of $n_L$, we have,
		\[
		\phi(n_L) = o\left(e^{-tn_L}\right) = o\left[\exp(-2\log L)\right] = o(L^{-2}),\ as\ L\to \infty.
		\]
		Then, for $m = 1,\cdots,m_L^{(1)}$, 
		\begin{align}
			&\mbox{E}\left[\exp(hn_LY_{j,1+mn_L})\mid \mathscr{F}_{-\infty}^{1+(m-1)n_L}\right]
			\nonumber\\
			=& 1 + \mbox{P}\left(Y_{j,1+mn_L}=1\mid \mathscr{F}_{-\infty}^{1+(m-1)n_L}\right)
			[\exp(hn_L)-1] \nonumber \\
			\le& 1 + [\alpha+o(L^{-2})][\exp(hn_L)-1], \label{neq:y1}
		\end{align}
		and $\mbox{E}[\exp(hn_LY_{j1})] = 1 + \alpha[\exp(hn_L)-1]$.
		Applying those recursively yields:
		\begin{align}
			& \mbox{E}[\exp(hn_LS_1^{(1)})] = 
			\mbox{E}\left\{\mbox{E}\left[\exp(hn_L\sum_{m=0}^{m_L^{(1)}}Y_{1,1+mn_L}) \Bigg| \mathscr{F}_{-\infty}^{1+(m-1)n_L}\right]\right\} \nonumber \\
			& \le \mbox{E}\left[\exp(hn_L\sum_{m=0}^{m_L^{(1)}-1}Y_{1,1+mn_L})\right]
			\{1 + [\alpha+o(L^{-2})][\exp(hn_L)-1]\} \nonumber \\
			& \le \cdots \le \{1 + [\alpha+o(L^{-2})][\exp(hn_L)-1]\}^{m_L^{(1)}+1} \nonumber \\
			&\le \exp\left\{\dfrac{L}{n_L}\log\{1 + [\alpha+o(L^{-2})][\exp(hn_L)-1]\} \right\} \nonumber \\
			& \le \exp\left\{\dfrac{tL}{2\log L}\log\{1 + [\alpha+o(L^{-2})][\exp(hn_L)-1]\} \right\} \label{neq:s1}
		\end{align}
		Then, for any $h>0$,
		\begin{align}
			&\mbox{P}\left(S_N>N\alpha+(\dfrac{2}{t})N^{1/2}\log L\right)\le 
			\exp\Bigg\{-hN\alpha-(\dfrac{2}{t})hN^{1/2}\log L \nonumber \\
			&+ \dfrac{tRL}{2\log L}\log\{1 + [\alpha+o(L^{-2})][\exp(hn_L)-1]\} \Bigg\}. \label{neq:p1}
		\end{align}
		By selecting $h=\dfrac{t}{2N^{1/2}\alpha(1-\alpha)}$, we can have,
		\begin{align}
			&\mbox{P}\left(S_N>N\alpha+(\dfrac{2}{t})N^{1/2}\log L\right)\le 
			\exp\Bigg\{-\dfrac{tN^{1/2}}{2(1-\alpha)} - \dfrac{\log L}{\alpha(1-\alpha)} \nonumber \\
			&+ \dfrac{tN}{2\log L}\log\{1 + [\alpha+o(L^{-2})][\exp(\dfrac{\log L}{N^{1/2}\alpha(1-\alpha)})-1]\} \Bigg\} \nonumber \\
			& = \exp\left\{-\dfrac{(1-t/4)\log L}{\alpha(1-\alpha)} + o(1) \right\} 
			= O(L^{-r}) \nonumber
		\end{align}
		where the equality is obtained by applying Taylor's expansion on $\exp(\cdot)$ and $\log(1+\cdot)$, and requires $R=o(L)$. Notice that $\alpha(1-\alpha)\le 1/4$, for $t<3$, we have $r>1$, so \eqref{eq:lemma1} can directly follow Borel-Cantelli Lemma \citep{serfling2009approximation}.
	\end{sloppypar}
\end{proof}

We now consider sequence of 0-1 valued random variables $U_{ji}$ s.t. $U_{ji}=U(X_{ji})$ and $\mbox{P}(U_{ji}=1) = 1-\mbox{P}(U_{ji}=0)=\alpha_N$ for all $j$ and $i$. We can define $S_j$, $S_N$ similar as before by replacing $Y_{ji}$ with $U_{ji}$.

\begin{lemma}\label{lemma2}
	If there exists positive $K_1$ and $K_2$ such that $K_1N^{-3/4}\log L \le \alpha_N\le K_2N^{-1/2}\log L$, for every positive $C$ and $s$, there exists positive $C_s<\infty$ and $L_0(s)$, such that for $L\ge L_0(s)$,
	\begin{equation}
		\mbox{P}\left\{\dfrac{S_N}{N}-\alpha_N > CN^{-3/4}\log L \right\}
		\le C_sL^{-s} \label{eq:lemma2}
	\end{equation}
\end{lemma}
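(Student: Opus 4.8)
The plan is to establish the one-sided bound \eqref{eq:lemma2} by a Chernoff argument for the rare-indicator sum $S_N$, treating the within-path dependence through the $\phi$-mixing structure and the across-path structure through the independence of the $R$ replications. First I would rewrite the event on the scale of $S_N$: since
\[
\mbox{P}\left\{\frac{S_N}{N}-\alpha_N>CN^{-3/4}\log L\right\}=\mbox{P}\left\{S_N>N\alpha_N+CN^{1/4}\log L\right\},
\]
Markov's inequality applied to $\exp(hS_N)$ gives, for every $h>0$,
\[
\mbox{P}\left\{S_N>N\alpha_N+CN^{1/4}\log L\right\}\le\exp\!\left[-h\left(N\alpha_N+CN^{1/4}\log L\right)\right]\prod_{j=1}^{R}\mbox{E}\!\left[\exp(hS_j)\right],
\]
using $S_N=\sum_j S_j$ and independence across replications. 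It thus suffices to control a single path's moment-generating function $\mbox{E}[\exp(hS_j)]$ for a stationary $\phi$-mixing sequence of Bernoulli($\alpha_N$) indicators and raise the result to the $R$-th power.

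For that single factor I would reuse the block construction from the proof of Lemma~\ref{lemma1}: space the indicators $n_L=\lceil(2/t)\log L\rceil$ apart so that \eqref{eq:mixing2} forces $\phi(n_L)=o(e^{-tn_L})=o(L^{-2})$, and peel the conditional expectations with \eqref{eq:mixing1}, which replaces each conditional cell probability by $\alpha_N+\phi(n_L)$. Since Assumption~\ref{assumption2} yields $N=RL=o(L^{2})$, the accumulated $\phi(n_L)$ mass is $o(1)$ in the exponent, so the bound is governed by $\alpha_N$. Writing the centered estimate as $\mbox{E}[\exp(hS_N)]\le\exp\{hN\alpha_N+\tfrac12 V_N h^{2}[1+o(1)]\}$ for the relevant small $h$, the linear term cancels the $\exp(-hN\alpha_N)$ factor and leaves the Chernoff exponent
\[
\Psi(h)=\tfrac12 V_N h^{2}[1+o(1)]-h\,CN^{1/4}\log L,
\]
where $V_N$ is the effective variance proxy of $S_N$ delivered by the block bookkeeping.

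I would then minimize $\Psi$ over $h>0$. The minimizer is $h^\ast\asymp CN^{1/4}\log L/V_N$, with value $\Psi(h^\ast)\asymp-(CN^{1/4}\log L)^{2}/(2V_N)$. For the genuine variance order $V_N=\Theta(N\alpha_N)$, the upper bound $\alpha_N\le K_2N^{-1/2}\log L$ makes $V_N=O(N^{1/2}\log L)$, so
\[
\Psi(h^\ast)\asymp-\frac{(CN^{1/4}\log L)^{2}}{N\alpha_N}\asymp-\frac{C^{2}}{K_2}\,\log L
\]
in the binding case $\alpha_N\asymp N^{-1/2}\log L$ (where $h^\ast\to0$, justifying the Gaussian expansion), whereas $\alpha_N$ near its lower end $K_1N^{-3/4}\log L$ gives a super-polynomial exponent. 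The lower bound also certifies that $h^\ast=O(1)$ throughout, so the exponential moments stay finite. Taking $C$ large relative to $s$ and $K_2$ then forces $\Psi(h^\ast)\le-s\log L$ for all $L\ge L_0(s)$, which is exactly \eqref{eq:lemma2} with a finite constant $C_s$; unlike Lemma~\ref{lemma1}, no Borel--Cantelli step is needed, since only a probability bound is claimed.

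The step I expect to be the main obstacle is guaranteeing that the quadratic coefficient is the \emph{genuine} variance order $V_N=\Theta(N\alpha_N)$ rather than an inflated one. The crude arithmetic--geometric-mean device of Lemma~\ref{lemma1} amplifies each block's exponent by the block length $n_L\asymp\log L$, and in the present small-$\alpha_N$ regime that extra factor would cancel the $\log L$ gain in $\Psi(h^\ast)$ and leave only an $O(1)$ exponent at $\alpha_N\asymp N^{-1/2}\log L$, which is too weak for any fixed $s$. To reach the sharp rate I would instead use a big-block/small-block near-factorization of $\mbox{E}[\exp(hS_j)]$: long blocks carrying essentially all the mass are made approximately independent by inserting gaps of length $\asymp\log L$ whose mixing errors accumulate only to a factor $1+o(1)$, keeping the per-path proxy at $\Theta(L\alpha_N)$ and the aggregate at $\Theta(N\alpha_N)$. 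Verifying this factorization together with the attendant Bennett-type per-block estimate uniformly over the admissible range of $\alpha_N$, and confirming that the constant multiplying $\log L$ can be pushed past any prescribed $s$ by enlarging $C$, is the delicate core of the argument.
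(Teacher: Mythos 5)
Your overall route is the same as the paper's: bound the tail by Markov's inequality applied to $\exp(hS_N)$, factor across the $R$ independent replications, and control a single path's moment generating function through the blocking device of Lemma~\ref{lemma1} with gaps $n_L\asymp\log L$ so that $\phi(n_L)=o(L^{-2})$. Where you diverge is at the end: you optimize over $h$ and insist on the genuine variance proxy $V_N=\Theta(N\alpha_N)$, which correctly yields the Chernoff exponent $-C^2\log L/(2K_2)$ in the binding case $\alpha_N\asymp N^{-1/2}\log L$, and you then close the argument by ``taking $C$ large relative to $s$ and $K_2$.'' That final move does not prove the lemma as stated, which asserts \eqref{eq:lemma2} for \emph{every} positive $C$ and $s$ simultaneously; with $C$ fixed and $s>C^2/(2K_2)$ your bound is too weak. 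This is a genuine gap relative to the statement --- but it is one you cannot repair, because the statement itself fails for small $C$: already for i.i.d.\ Bernoulli($\alpha_N$) indicators (a legitimate $\phi$-mixing sequence, with $R=1$) and $\alpha_N=K_2N^{-1/2}\log L$, the threshold $CN^{1/4}\log L$ is $(C/\sqrt{K_2})\sqrt{\log L}$ standard deviations, so moderate-deviation asymptotics give probability $L^{-C^2/(2K_2)(1+o(1))}$, which is not $O(L^{-s})$ once $s>C^2/(2K_2)$. Your conclusion (``for every $s$ there is a large enough $C$'') is the correct, provable version, and it is all that the proof of Theorem~\ref{theorem1} actually uses, since the constant inside its $O(N^{-3/4}\log L)$ may be taken as large as needed.

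The paper's own proof circumvents the same obstacle differently, and less convincingly. It does not optimize $h$: it fixes $h=s/N^{1/4+\varepsilon}$, retains the arithmetic--geometric-mean bound from Lemma~\ref{lemma1} (which, as you rightly observe, inflates the per-path variance proxy by the block length $n_L\asymp\log L$), and shows that for each fixed $\varepsilon>0$ the quadratic remainder is $O(N^{-2\varepsilon}(\log L)^2)=o(1)$, arriving at $\exp\{-s\log L+o(1)\}$ --- but for the \emph{enlarged} threshold $N^{-(3/4-\varepsilon)}\log L$, i.e.\ for a smaller event. The concluding step ``let $\varepsilon\to0$'' is not a valid limiting argument: the events are nested the wrong way, and at $\varepsilon=0$ the discarded remainder is of order $(\log L)^2$, which dominates $s\log L$. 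So your diagnosis of where the difficulty lies (the size of the variance proxy in the small-$\alpha_N$ regime) is sharper than the paper's treatment; your big-block/small-block refinement is the right tool if one wants the exponent $-C^2\log L/(2K_2)$ rather than the $O(1)$ exponent the crude device delivers, but no refinement of the Chernoff bound will give \eqref{eq:lemma2} for arbitrary $C$. The honest fix is to restate the lemma as ``for every $s$ there exist $C(s)$, $C_s$, and $L_0(s)$,'' which is exactly what you prove and what the downstream application requires.
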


\begin{sloppypar}
	\begin{proof}
		For simplicity, let $C=1$, {and then} $\forall \varepsilon>0$, follow the similar procedure in the proof of Lemma~\ref{lemma1},
		\begin{align}
			&\mbox{P} \left\{ \dfrac{S_N}{N}-\alpha_N > N^{-(3/4-\varepsilon)}\log L \right\} = \mbox{P}\left\{S_N > N\alpha_N + N^{1/4+\varepsilon}\log L \right\} \nonumber \\
			& \le \underset{h>0}{\inf}\left\{ \exp\left[-hN\alpha_N-hN^{1/4+\varepsilon}\log L\right]
			\left(\mbox{E}[\exp(hS_1^{(1)})]\right)^R\right\} \nonumber
		\end{align}
		where $S_j^{(\ell)}$ is also defined similarly as in Lemma~\ref{lemma1}, by replacing $Y_{ji}$ with $U_{ji}$. Again we choose $n_L=\lceil2t^{-1}\log L\rceil$, then \eqref{neq:y1}, \eqref{neq:s1} still hold. Then for any $h>0$,
		\begin{align}
			&\mbox{P}\left\{ \dfrac{S_N}{N}-\alpha_N > N^{-(3/4-\varepsilon)}\log L \right\} \le 
			\exp\Bigg\{ -(\dfrac{2}{t})hN^{1/4+\varepsilon}\log L\nonumber \\
			& -hN\alpha_N+ \dfrac{tRL}{2\log L}\log\{1 + [\alpha_N+o(L^{-2})][\exp(hn_L)-1]\} \Bigg\}.
			\nonumber
		\end{align}
		By selecting $h=\dfrac{s}{N^{1/4+\varepsilon}}$, we can have,
		\begin{align}
			&\mbox{P}\left\{ \dfrac{S_N}{N}-\alpha_N > N^{-(3/4-\varepsilon)}\log L \right\}\le \exp\Bigg\{-stN^{3/4-\varepsilon}\alpha_N - s\log L \nonumber 
		\end{align}
		\begin{align}
			&+ \dfrac{tN}{2\log L}\log\{1 + [\alpha_N+o(L^{-2})][\exp(\dfrac{2s\log L}{tN^{1/4+\varepsilon}})-1]\} \Bigg\} \nonumber \\
			& = \exp\left\{-s\log L + \dfrac{s^2\alpha_N(1-\alpha_N)\log L}{t}N^{1/2- 2\varepsilon} + o(1) \right\} \nonumber
		\end{align}
		where the last equality is obtained by applying Taylor's expansion on $\exp(\cdot)$ and $\log(1+\cdot)$, and requires $R=o(L)$. Since $\alpha_N\le K_2N^{-1/2}\log L$,
		$t^{-1}s^2\alpha_N(1-\alpha_N)\log LN^{1/2- 2\varepsilon} \le O(N^{-2\varepsilon}(\log L)^2)=o(1)$, we have that
		\begin{equation}
			\mbox{P}\left\{ \dfrac{S_N}{N}-\alpha_N > N^{-(3/4-\varepsilon)}\log L \right\}\le \exp\{-s\log L + o(1) \}. \label{neq:p22}
		\end{equation}
		Let $\varepsilon\to 0$, then \eqref{eq:lemma2} directly follows from \eqref{neq:p22}.
	\end{proof}
\end{sloppypar}

\end{document}